\newtheorem{thm}{Theorem}[section]
\newtheorem{lem}[thm]{Lemma}
\newcommand{\sean}[1]{}
\newcommand{\T}{{\bold T}}
\newcommand{\F}{{\bold F}}
\newcommand{\bif}{{\bf if }}
\newcommand{\belse}{{\bf else }}
\newcommand{\bthen}{{\bf then }}
\newcommand{\R}{{\mathcal R}}
\newcommand{\D}{{\mathcal D}}
\providecommand{\norm}[1]{\lVert#1\rVert}
\begin{document}

\title{Computing \\affine combinations, distances and correlations of \\ recursive partition functions}

\author{Sean Skwerer and Heping Zhang\\ Collaborative Center for Statistics in Sciences\\ Yale University}

\date{November 11, 2014}
\maketitle

\begin{abstract}
Recursive partitioning is the core of several statistical methods including Classification and Regression Trees, Random Forest, and AdaBoost.
Despite the popularity of tree based methods, to date, there did not exist methods for combining multiple trees into a single tree, or methods for systematically quantifying the discrepancy between two trees.
Taking advantage of the recursive structure in trees we formulated fast algorithms for computing affine combinations, distances and correlations in a vector subspace of recursive partition functions.
\end{abstract}

\newpage
\tableofcontents
\newpage

\section{Introduction}
Recursive partitioning is the core for many statistical and machine learning methods including Classification and Regression Trees, Multivariate Adaptive Regression Splines, AdaBoost and Random Forest.
Methods based on recursive partitioning are regarded among the top data mining methods \cite{wu2007}, and in a study of over 100 classification methods versions of Random Forest occupied 3 out of the top five spots \cite{Fernandez-Delgado2014}.

The widespread application of recursive partitioning methods can be attributed to their versatility, speed and robustness.
Recursive partitioning has been used to solve regression, density estimation and classification problems.
Recursive partitioning procedures are divide and conquer algorithms which find optimal partitions of the data at each recursive stage.
The criteria for an optimal partition depends on the type of problem: regression, classification, or density estimation, but in all 
cases the partition is selected to optimize some quantification of purity e.g. reducing variance in regression or homogenizing the distribution in classification or density estimation -- 
for more details see \cite{hastie2009elements,Zhang2010}.
The bifurcating structure created by recursive partitioning algorithms is often referred to as a tree -- we give a formal definition in Section \ref{sec:tree}.

It is common for recursive partitioning algorithms to employ stopping rules aimed at preventing over-fitting such as stopping when the number of observations is below a certain quantity \cite{breiman1984classification} or stopping when the conditional p-value of a partition is to low  \cite{Strobl2009}.
Researchers haven proven that recursive partitioning algorithms will produce trees that approach the true optimal decision rule as more and more data become available \cite{Gordon1978},
however methods which use only a single tree are often out-performed by ensemble methods which use an average or mode of the predictions from a collection of trees.
Currently, the two most widely used methods to build ensembles of trees are random forests and boosting.

A complete theory for ensembles of trees has been a topic of research in statistics and machine learning since their introduction when they achieved state of the art performance in classification problems \cite{freund1996,Breiman1996,Breiman2001}.
An explanation of boosting as optimization in function space \cite{Friedman2000}, 
as opposed to parameter space, has lead to 
(1) consistency results achieved through regularization techniques which were not part of the original algorithm \cite{Zhang2005}, and 
(2) new algorithms which minimize other cost functionals and in some cases improve on the original boosting algorithm \cite{Mason2000,Buhlmann2003}.
Considerable progress in theory for random forests has been made recently 
\cite{Wager2015a,Scornet2015,Mentch2014}. 
These represent a frontier in analyses of random forests where the trend has been assumptions  in the hypotheses of theorems which are more closely aligned with practice and more sophisticated results leading to estimators of prediction variance. 
All things considered this progress is a big step forward, and will contribute to methods of inference based on quantification of variation in point estimates from ensembles.
However, the existing theory is not complete.
The need to understand how ensembles of trees are able to avoid over-fitting by a mechanism which complements stopping rules and pruning has been emphasized  \cite[Discussion: Breiman]{Friedman2000}
 and recently an explanation of how over-fit trees can form a reliable ensemble has been offered
 \cite{Wyner2015}.
Theory for ensembles of trees is continuing to mature, and hopefully some unifying understanding will come soon.

Recursive partitioning algorithms select optimal partitions at each stage, but generally there are no guarantees about their global optimality with respect to the trade off between purity and number of partitions, or expected error for predictions.
Bayesian methods for building trees, which have been shown to outperform recursive partitioning in some cases, apply sophisticated optimization procedures which employ Markov-Chain-Monte-Carlo search \cite{Chipman2010,Chipman2012}.
An affine combination of trees is a tree, and thus ensemble methods actually produce a single tree represented as a weighted sum of many trees.
Therefore ensemble methods can be interpreted as better algorithms for finding an optimal tree, even if they do not explicitly give a single tree.
This begs several questions: Are all the trees in an ensemble necessary? Is it possible to produce one tree or a few trees which can perform approximately as well as an entire ensemble?
What is the trade-off between parsimony and predictive power of an ensemble?
To this end we created an algorithm for computing a tree which represents an affine combination of trees.
We extend this method to compute quantities which measure the similarity or difference between trees.
These measures can be used to explore and summarize the distribution of trees in an ensemble via multi-dimensional scaling or cluster analysis.
To measure how well one forest approximates another, we introduce a method for computing the distance between two forests.

The remaining content of this manuscript is organized as follows.
The preliminary section focuses on the main element of interest: trees obtained from recursive partitioning algorithms.
The basics of recursive partitioning are discussed in Section \ref{sec:introRP}.
The result of recursive partitioning algorithms is a tree, which we define in Section \ref{sec:tree}.
A generic version of our algorithm for combining trees is presented in Section \ref{sec:CombineRPF}.
In Section \ref{sec:distances} we describe distances and correlations for the functions defined by trees.
Implementation details for specific types of trees are discussed in Section \ref{sec:specifications}.

\section{Preliminaries}

\subsection{Introduction to recursive partitioning}\label{sec:introRP}
We introduce recursive partitioning with an example from Chapter 2 of \cite{Zhang2010}, which uses the database from the Yale Pregnancy Outcome Study. 
In this example a subset of 3,861 women whose pregnancies ended in a singleton live birth are selected from this database. 
Preterm delivery is the outcome variable of interest, and 15 variables are candidates to be useful in representing routes to preterm delivery. 
The candidate predictor variables are listed in Table \ref{table:PretermBirthVariables}.

\begin{table}
	\begin{tabular}{l | l l l }
		\hline
		\hline
		Variable name & Label & Type & Range/levels\\
		Maternal age & $x_1$ & Continuous & 13-46\\
		Marital status & $x_2$ & Nominal & Currently married,\\
		& & & divorced, separated,\\
		& & & widowed, never married\\
		Race & $x_3$ & Nominal & White, Black, Hispanic,\\
		& & & Asian, Others \\
		Marijuana use & $x_4$ & Nominal & Yes, no \\
		Times of using marijuana & $x_5$ & Ordinal $ \geq 5, 3-4, 2, 1 (daily)$\\
		Years of education & $x_ 6$ & 4-27\\
		Employment & $x_7$ & Nominal & Yes, no\\
		Smoker & $x_8$ & Nominal & Yes, no\\
		Cigarettes smoked & $x_9$ & Continuous & 0-66\\
		Passive smoking & $x_{10}$ & Nominal & Yes, no\\
		Gravidity & $x_{11}$ & Ordinal & 1-10\\
		Hormones/DES used by mother & $x_{12}$ & Nominal & None, hormones, DES,\\
		& & & both, uncertain\\
		Alcohol (oz/day) & $x_{13}$ & Ordinal & 0-3\\
		Caffeine (mg) & $x_{14}$ & Continuous & 12.6-1273\\
		Parity & $x_{15}$ & Ordinal & 0-7
		
	\end{tabular}
	\caption{Candidate predictor variable for preterm delivery from the Yale Pregnancy Outcome Study database.}
	\label{table:PretermBirthVariables}
	
\end{table}

\begin{figure}
	\includegraphics[width=9cm]{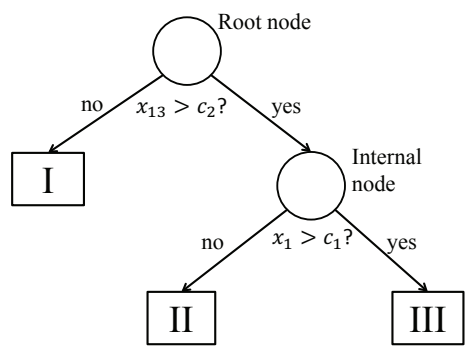}
	\caption{An illustrative tree structure for pathways to preterm birth. $x_1$ is age and $x_{13}$ is the amount of alcohol drinking. Circles and dots represent different outcomes. }
	\label{fig:PretermBirthTree}
\end{figure}

Consider the example tree diagram in Fig. \ref{fig:PretermBirthTree}. 
The tree has three layers of nodes. The first layer contains the unique root node, which is the circle at the top of the tree. 
The root node is partitioned into two daughter nodes in the second layer of the tree: one terminal node, which is the box marked I, and one internal node, namely the circle down and to the right of the root node. 
The internal node is partition into two daughters, which are the terminal nodes marked II and III. 

The tree represents a recursive partition of the data.
Recursive partitioning begins after the root node, which contains all the data. 
Moving down from the root node to the second layer data are partitioned to the right daughter if $x_{13}>c_2$ and partitioned to the left daughter if $x_{13}\leq c_2$. 
Thus all data with $x_{13}\leq c_2$ are contained in the terminal node labeled I. 
On the other hand, moving down from the internal node to the third layer, data with $x_{1} > c_1$ are partitioned to the right daughter, and data with $x_{1}\leq c_1$ are partition to the left daughter.
Thus data at terminal nodes marked II and III are recursively partitioned first at the root node and then again at the internal node.

Generally classification and regression trees can have many layers. 
Algorithms are used to construct trees.
During tree construction two main decisions are made why and how a parent node is split into two daughter nodes and when to declare a terminal node. 
Criterion to make these decisions are based on homogeneity of the data in a node. 
There are several methods, for a full treatment see \cite{Zhang2010}.
In the next section we formally define a representation of recursive partition functions.

\subsection{Trees}\label{sec:tree}
A \emph{partition} of a set $A$ is a collection of subsets of $A$ which form a disjoint cover of $A$.
A binary tree $T$ is a list of nodes $v_0,\ldots,v_m$ which have following attributes.
\begin{itemize}
	\item[] {\bf Internal or terminal}: Every node $v$ has a parent node $p(v)$, except one node which has no parent, called the root. 
	A node, $v$, is internal if it has a left daughter $l(v)$ and a right daughter $r(v)$, which are elements of the set $\{v_1,\ldots,v_m\}$, otherwise
	$v$ has no daughters, and is called a terminal node.
	\item[] {\bf Parents} For any two nodes $u$ and $v$ if $u$ is the parent of $v$, $p(v)=u$, then $v$ must be a daughter of $u$, that is either (i) $l(u)=v$ or (ii) $r(u)=v$.
	\item[] {\bf Regions and Splits}:  
	Each node $v$ is associated with a set called a region, $A(v)$.
	The region of the root, $A(v_0)$ is given, while the regions of other nodes are defined recursively.
	Each internal node is associated with a split $c(v)$ which is a condition that partitions $A(v)$ into two complimentary sets. 
	One set is called the left set, $L(v)$, and the other is called the right set $R(v)$. 
	The left set is associated with the left daughter, and likewise the right set is associated with the right daughter.
	The region of node $v$ is obtained by applying the appropriate split condition to the region of its parent:
	$A(v)= L(p(v))$ if $v=l(p(v))$ or $A(v)=R(p(v))$ if $v=r(p(v))$.
	\item[] {\bf Values}: Every terminal node has a function, $f_v$, which maps from $A(v)$ to a set ${\mathcal R}_T$.
	These functions are collected into a set $F=\{f_v|\text{terminal nodes } v \in T\}$.
\end{itemize}

The regions of the terminal nodes of a binary tree with root $v_0$ are a partition of $A(v_0)$.
Every point in $x\in A(v_0)$ is contained in the region of exactly one terminal node $v$, and each terminal node $v$ is associated with a function $f_v$ which maps from $A(v)$ to $\mathcal{R}$, thus
a binary tree defines a function mapping from $A(v_0)$ to $\mathcal{R}$.
We use the symbol for a tree as the function it represents - that is given a tree $T$ with root $v_0$, the tree maps from a point $x \in A(v_0)$ to $T(x)$ or takes a subset $B \subseteq A(v_0)$ to its image in $\mathcal R$, $T(B)$.

Given a tree $T$ and a node $v$ in $T$, the subtree at $v$, $T_v$, is defined recursively as $v$ and all the nodes in $T$ which are daughters of nodes in the subtree at $v$.
A subtree $T_v$ defines a function which is the same function as $T$ but only defined on $A(v)$.

Several tree-based methods use models at terminal nodes. Bayesian CART \cite{Chipman2012} suggests using linear models at the terminal nodes, and Multivariate Adaptive Regression Splines use higher order models \cite{Friedman1991}.
However many methods, including CART, Random Forest, conditional trees and boosted trees, use constants for $f_v$. 

Note that the algorithm in Section \ref{sec:CombineRPF} is valid for any $c(v)$ which bipartitions $A(v)$ e.g. $L(v)$ and $R(v)$ could have non-linear boundaries. However, the complexity of determining if non-linear regions intersect can be very difficult. Many methods use conditions on one variables at a time.
These issues are discussed further in Section \ref{sec:specifications}.

There may be multiple trees which define the same recursive partition function.
For example the function in Figure \ref{fig:CombinedRecursivePartition} could be represented by the tree in Figure \ref{fig:CombinedRecursivePartition} or by the tree in Figure \ref{fig:CombinedRecursivePartitionAlternate}.
The correlation and distance we define in Section \ref{sec:distances} are based on the functions defined by trees and these measures do not account for discrepancies in the structures of trees.

\section{Combining trees}\label{sec:CombineRPF}

Recursive partitioning is typically applied to datasets with a univariate response $y$ and $p$ predictive variables $x_1,\ldots,x_p$, which could be a mixture of categorical and quantitative predictive variables.
The data structure we define and our algorithms for affine combinations and norms are generic, in the sense that they are independent from the types of the response and predictive variables.
These algorithms operate on any structure which can be represented by the tree defined in Section \ref{sec:tree} -- some examples are regression trees with multivariate response and splines.
The main restrictions required for sums and norms to be well defined are that the trees which are to be combined are contained in the same subspace of a vector space of functions, that is: (1) trees must have the same domain and range, (2) the range must be a normed vector space, and (3) a measure must be provided for the domain so that the mean and variance of a tree are well-defined.

As a precursor to computing affine combinations and norms of trees we give an algorithm
which takes two trees with the same domain, $\D$, such as $T^1$ and $T^2$ in Fig. \ref{fig:RecursivePartitionFunctionsExample} and combines them into a tree $\T$, and a set of vectors of functions indexed by the terminal nodes of $\T$, $\F=\{F_v| \forall \text{ terminal nodes } v \in \T\}$.
The vectors of functions at the terminal nodes of $\T$ map from $D$ to the product space $\R_{T^1} \times \R_{T^2}$.
A single tree which represents a combination of $T^1$ and $T^2$ must exist, and can be obtained by 
partitioning the regions of terminal nodes of $T^1$ into smaller regions using the splits from $T^2$.
More details about the method are presented in Sections \ref{sec:ExtractSubtree}  and \ref{sec:CombineRPFAlgorithm}.

A collection of trees, $T^1,\ldots,T^m$, mapping from the same domain $\mathcal D$, can be represented with a tree, $\T$, mapping from $\mathcal D$ to the product space $\prod_{j=1}^m {\mathcal R}_{T^j}$.
$\T$ may be obtained by iteratively applying the Tree Combiner Algorithm. 
Suppose that $T^1,\ldots,T^m$ map to the same vector space with real scalars.
Let $\alpha=(\alpha_1,\ldots,\alpha_m)$ be a vector in $\mathbb{R}^m$.
When the vector valued function $(T^1,\ldots,T^m)$ is represented by $\T$, then a tree representing the weighted sum $\alpha_1T^1+\ldots +\alpha_mT^m$ may be obtained from $\T$, by replacing the vector of functions $F_v=(f^1_v,\ldots,f^m_v)$ at each terminal node $v$ in $\T$ by the inner product of $F_v$ and $\alpha$.

The content of this Section is organized as follows.
Section \ref{sec:ExtractSubtree} describes the details of an important subroutine which collects the subtree of a recursive partition function restricted to a subset of its domain. 
Section \ref{sec:CombineRPFAlgorithm} describes the main algorithm, without assumptions about the type of domain, the types of partitions, or the types of functions at terminal nodes. 
Section \ref{sec:correctness} gives a proof of correctness and analysis of the computational cost of the algorithm.
Section \ref{sec:SplitRegion} describes details for how to check if a split intersects a region -- Section \ref{sec:SplitRegionDiscrete} concerns discrete sets, Section \ref{sec:SplitRegionUnivariate} deals with splits on one continuous variable, and Section \ref{sec:Preliminaries} describes methods for multivariate splits.

\subsection{Extracting a subtree in a region of the domain}\label{sec:ExtractSubtree}

To begin with a simpler question than finding a tree which represents two trees simultaneously, we ask, given a tree $T$, with root $v_0$ and $B \subset A(v_0)$, find a tree $T'$ such that $T'(B)=T(B)$, and $T'$ only uses conditions from nodes of $T$ with regions that intersect $B$.

Initially, $T'$ will be a single node $w_0$ with no children, and the region of $w_0$ is $B$.
To begin with we must find the minimum part of $T$ which contains $B$ and then extract the parts of that subtree which intersect $B$.
We define a recursive algorithm which operates on a terminal node of $T'$, $w$, and a node $v$ of $T$ such that $A(w) \subseteq A(v)$.
We can start at the roots $v=v_0$ and $w=w_0$.
If $A(w)$ intersects both $L(v)$ and $R(v)$, then we must use $c(v)$ to partition $A(w)$.
After splitting $w$ continue recursively with the left and right daughters of $w$ and $v$.
However, if $A(w)$ is a subset of either $L(v)$ or $R(v)$ then we must can move on to compare $w$ with $l(v)$ or $r(v)$, whichever the case may be.
Otherwise, if $v$ is terminal then let $f_w=f_v$.

Let $x$ be an element of $B$ and let $q$ be the node of terminal node of $T$ such that $x \in A(q)$.
Whenever the algorithm is called, the $A(w)$ is the intersection of $A(v)$ and $B$.
Therefore, the terminal node in $T'$ which contains $x$ will be associated with $f_q$.
Thus, for all $x \in B$, $T'(x)=T(x)$.

\begin{algorithm}[H]
	\KwData{$w$ is a terminal node in $T'$ \\ $v$ is a node in $T$ s.t. for any $q$ from $T$ if $A(q) \cap A(w) \neq \emptyset$ then $q$ is in the subtree at $v$}
	\KwResult{$T'_w$ is equivalent to $T_v$ over $A(w)$}
	\eIf{$A(w) \cap R(v) \neq \emptyset$ and $A(w) \cap L(v) \neq \emptyset$}{
		$c(w) \leftarrow c(v)$\;
		create $l(w)$ and $r(w)$ \;
        {\bf collect}$(l(w),l(v))$\;
        {\bf collect}$(r(w),r(v))$\;
	}{
	\If{$A(w)\cap L(v) \neq \emptyset$}{{\bf collect}$(w,l(v))$}
	\If{$A(w)\cap R(v) \neq \emptyset$}{{\bf collect}$(w,r(v))$}
	}
\caption{{\bf collect}$(w,v)$}\label{alg:ExtractBinT}
\end{algorithm}

\begin{figure}
	\includegraphics[width=9cm]{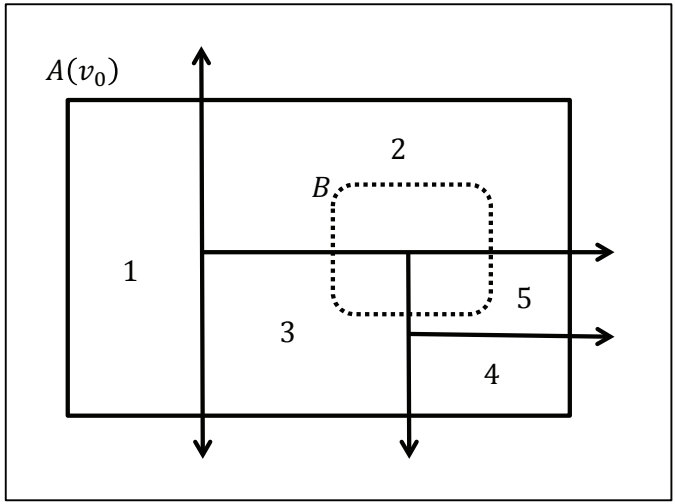}
	\includegraphics[width=9cm]{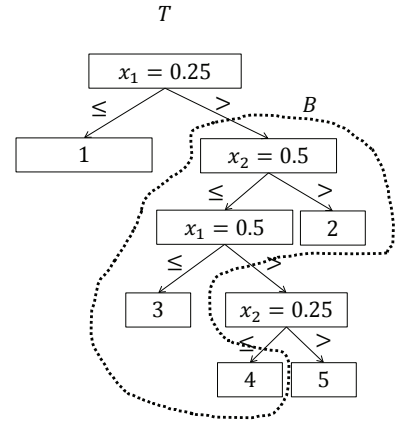}
	\includegraphics[width=18cm]{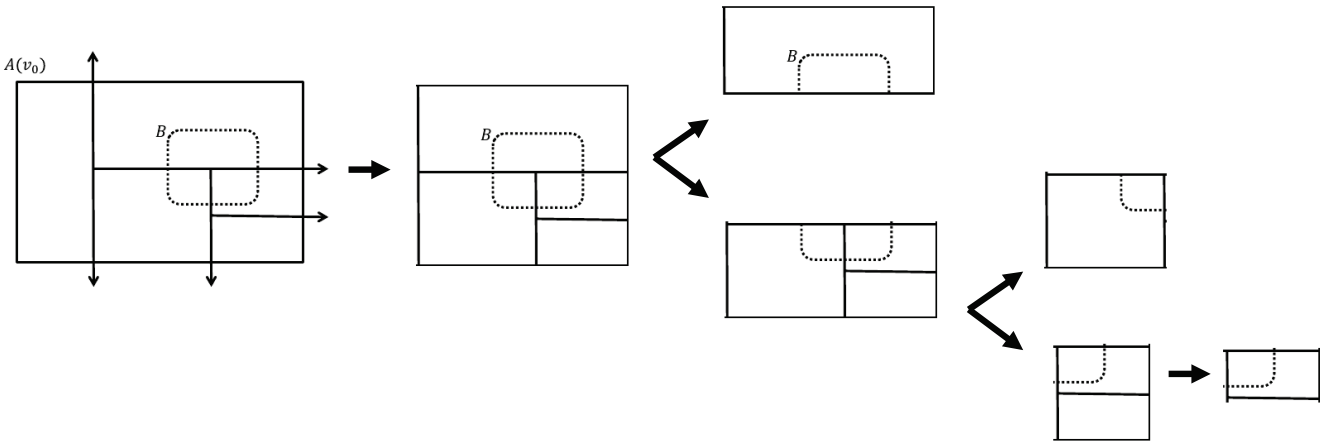}
	\caption{Illustration for extracting part of a tree over a subset $B$. Algorithm \ref{alg:ExtractBinT} locates the node with smallest region containing $A(w)$, and then splits $T'$ according to the condition in that node. Recursion continues, either moving to the left or right daughter in $T$ or partitioning until a terminal node of $T$ is reached.}
	\label{fig:collectBinTExample}
\end{figure}

\subsection{Tree Combiner Algorithm}\label{sec:CombineRPFAlgorithm}

The combined tree, $\T$, is initialized as a single node, $w_0$, with region $A(w_0)$ equal to $D$.
The main subroutine, ${\bf combine}$ is used recursively.
The inputs for ${\bf combine}$ are a node $u$ from $T^1$, a node $v$ from $T^2$, and a terminal node $w$ from $\T$, such that $A(w)$ is a subset of $A(u)$ and $A(v)$. 
When the recursion is over $\T_w$ and $\F_w$ are equivalent to $(T^1_u,T^2_v)$ over $A(w)$.
The algorithm reaches a base case when either $u$ or $v$ is a terminal node, where the problem of combining $T^1_u$ and $T^2_v$ reduces to the problem of extracting the part of a tree in a region. 
If neither $u$ nor $v$ is terminal and at least one of $c(u)$ and $c(v)$ can partition $A(w)$
then the algorithm partitions $A(w)$ and is called recursively on the daughters of $w$, and the appropriate nodes in $T^1$ and $T^2$.
There are four recursive cases: {\bf intersection absent}, {\bf crossing splits}, {\bf parallel splits}, and {\bf identical splits} which are defined in Section \ref{sec:case_descriptions}.

Details for how to proceed in each recursive case are described in Section \ref{sec:CasesForSplitPushNoIntersection}, \ref{sec:CasesForSplitPushCrossing}, \ref{sec:CasesForSplitPushParallel}, and \ref{sec:CasesForSplitPushIdentical}.
When Algorithm \ref{alg:TreeCombiner} reaches a terminal node of either $T^1$ or $T^2$, that branch of the recursion ends. 
Details for how to update $\T$ when a terminal node is reached are described in Section \ref{sec:CasesForSplitPushTerminal}.\\

\begin{algorithm}[H]
	\KwData{$w$ is a terminal node in $\T$ \\ $u$ is a node in $T^1$ s.t. for any $q$ from $T^1$ if $A(q) \cap A(w) \neq \emptyset$ then $q$ is in the subtree at $u$\\ $v$ is a node in $T^2$ s.t. for any $q$ from $T^2$ if $A(q) \cap A(w) \neq \emptyset$ then $q$ is in the subtree at $v$\\}
	\KwResult{$\T_w$ and $\F_w$ which are equivalent to $(T^1_u,T^2_v)$ over $A(w)$}
	\eIf{$u$ or $v$ is terminal}{go to {\bf terminal node reached}, Section \ref{sec:CasesForSplitPushTerminal}}
	{
		\If{at least one of $c(u)$ or $c(v)$ do not intersect}{ $A(w)$ go to {\bf intersection absent}, Section \ref{sec:CasesForSplitPushNoIntersection}}
		\If{if $c(u)$ and $c(v)$ cross in $A(w)$ }{go to {\bf crossing splits}, Section \ref{sec:CasesForSplitPushCrossing}}
		\If{if $c(u)$ and $c(v)$ are parallel in $A(w)$}{ go to {\bf parallel splits}, Section \ref{sec:CasesForSplitPushParallel}}
		\If{if $c(u)$ and $c(v)$ are identical in $A(w)$ }{go to {\bf identical splits}, Section \ref{sec:CasesForSplitPushIdentical}}
	}
\caption{{\bf combine}$(u,v,w)$}\label{alg:TreeCombiner}
\end{algorithm}


\subsubsection{Case descriptions}\label{sec:case_descriptions}
Before we discuss the recursive cases of Algorithm \ref{alg:TreeCombiner} in more detail it is helpful to consider discuss the conditions for the recursive cases and the tasks achieved by the main subroutines. 
The following two subroutines check if parts of trees meet certain conditions.
\begin{itemize}
	\item[] {\bf Check if a region can be partitioned by a condition:} Given two nodes $u$ and $v$ check if the condition $c(v)$ partitions $A(u)$, that is check if both the sets $A(u)\cap L(v)$ and $A(u) \cap R(v)$ are non-empty.
	\item[] {\bf Check if two splits are crossing, parallel or identical:} Given two nodes $u$ and $v$ determine which of the following disjoint and exhaustive events is true:
	\begin{itemize}
		\item[(i)] (crossing) none of the following sets are empty: $R(u)\cap R(v)$, $R(u)\cap L(v)$, $L(u)\cap R(v)$, and $L(u)\cap L(v)$
		\item[(ii)] (parallel) $L(u)\cap R(v)$ is empty and none of the following sets are empty:  $R(u)\cap L(v)$, $R(u)\cap R(v)$, and $L(u)\cap L(v)$
		\item[(iii)] (parallel) $L(v)\cap R(u)$ is empty and none of the following sets are empty:  $R(u)\cap R(v)$, $L(u)\cap R(v)$, and $L(u)\cap L(v)$ 
		\item [(iv)] (identicial) $c(u)$ and $c(v)$ are the same
	\end{itemize}
	When a region $A$ is reduced by both $c(u)$ and $c(v)$ it will be of interest to focus on whether these splits are crossing, parallel or identical inside of $A$. We can check if two splits are crossing, parallel or identical inside of $A$, by including intersection with $A$ in all of the above events.
	For example, $c(u)$ and $c(v)$ are crossing in $A$ if none of the following sets are empty: $R(u)\cap R(v) \cap A$, $R(u)\cap L(v) \cap A$, $L(u)\cap R(v) \cap A$, and $L(u)\cap L(v) \cap A$.
	\item[] {\bf Find subtreee in region:} Given a region $A$ and a tree $T$ find a tree $T'$ which is equivalent to $T$ over $A$ - see Section \ref{sec:ExtractSubtree} for details.
\end{itemize}

\subsubsection[Case: No Intersection]{Cases for Algorithm \ref{alg:TreeCombiner}: Intersection Absent}\label{sec:CasesForSplitPushNoIntersection}
There are several cases for how at least one of the conditions $c(u)$ and $c(v)$ does not intersect $A(w)$.
Suppose $c(u)$ does not intersect $A(w)$. 
Then $A(w)$ is contained either in the left piece at $u$, $L(u)$, or the right piece at $u$, $R(u)$.
If $A(w)$ is contained in $R(u)$ then it is necessary to explore the subtree at the right daughter of $u$, $r(u)$.
Algorithm \ref{alg:TreeCombiner} checks if any of its split cross over $A(w)$ and check how these pieces may interact with pieces in the subtree of $V$ at $v$.
Therefore, in that case, split and push is called for $r(u)$, $v$ and $w$.
All the possible cases and how to proceed in each case are outlined here.\\
Exactly one of the following is true: 
\begin{itemize}
	\item[(u,i)] the region of $w$ is in the left region of $u$, $A(w) \subseteq L(u)$
	\item[(u,ii)] the region of $w$ is in the right region of $u$, $A(w) \subseteq R(u)$
	\item[(u,iii)] $c(u)$ splits the region of $w$, $A(w)\cap c(u)\neq \emptyset$
\end{itemize}
and, exactly one of the following is true: 
\begin{itemize}
	\item[(v,i)] the region of $w$ is in the left region of $v$, $A(w) \subseteq L(v)$
	\item[(v,ii)] the region of $w$ is in the right region of $v$, $A(w) \subseteq R(v)$
	\item[(v,iii)] $c(v)$ splits the region of $w$, $A(w)\cap c(v)\neq \emptyset$
\end{itemize}

\begin{itemize}
	\item[] \bif $A(w) \subseteq L(u)$ and $A(w) \subseteq L(v)$ \bthen {\bf combine}$(l(u),l(v),w)$ 
	\item[] \bif $A(w) \subseteq L(u)$ and $A(w) \subseteq R(v)$ \bthen {\bf combine}$(l(u),r(v),w)$
	\item[] \bif $A(w) \subseteq L(u)$ and $A(w)\cap c(v) \neq \emptyset$ \bthen {\bf combine}$(l(u),v,w)$
	\item[] \bif $A(w) \subseteq R(u)$ and $A(w) \subseteq L(v)$ \bthen {\bf combine}$(r(u),l(v),w)$
	\item[] \bif $A(w) \subseteq R(u)$ and $A(w) \subseteq R(v)$ \bthen {\bf combine}$(r(u),r(v),w)$
	\item[] \bif $A(w) \subseteq R(u)$ and $A(w)\cap c(v) \neq \emptyset$ \bthen {\bf combine}$(r(u),v,w)$
	\item[] \bif $A(w)\cap c(u) \neq \emptyset$ and $A(w) \subseteq L(v)$ \bthen {\bf combine}$(u,l(v),w)$
	\item[] \bif $A(w)\cap c(u) \neq \emptyset$ and $A(w) \subseteq R(v)$ \bthen {\bf combine}$(u,r(v),w)$
\end{itemize}

\vspace{0.5cm}
\subsubsection[Case: crossing]{ Cases for Algorithm \ref{alg:TreeCombiner}: Crossing splits}\label{sec:CasesForSplitPushCrossing}
When $c(u)$ and $c(v)$ split $A(w)$ into four non-empty subsets, these splits are said to cross inside $A(w)$.
Only one of $c(u)$ and $c(v)$ can be used as the split for $c(w)$. 
Which of $c(u)$ and $c(v)$ is chosen is arbitrary, but this choice impacts which parts of $T^1$ and $T^2$ are used in the recursive calls to Algorithm \ref{alg:TreeCombiner}.
For example, if $c(u)$ is chosen as the split for $w$, then there is a recursive call for $l(u)$, $v$ and $l(w)$, and a recursive call for $r(u)$, $v$ and $r(w)$.
Pseudo-code for using $c(u)$ is given below, but for brevity, pseudo-code for using $c(v)$, which is analogous to the code for using $c(u)$, is omitted.\\ 

\noindent choose either $u$ or $v$ to split $A(w)$\\
suppose $u$ is chosen to split $A(w)$ then do the following
\begin{itemize}
	\item[] create daughters for $w$, $l(w)$ and $r(w)$
	\item[] let $c(w)=c(u)$ (thus $A(l(w))=A(w)\cap L(u)$ and $A(r(w))=A(w)\cap R(u)$)
	\item[] {\bf combine}$(l(u),v,l(w))$
	\item[] {\bf combine}$(r(u),v,r(w))$ 
\end{itemize}
if $v$ is chosen to split $w$, then do the above, but swap the roles of $u$ and $v$

\vspace{0.5cm}
\subsubsection[Case: parallel]{Cases for Algorithm \ref{alg:TreeCombiner}: Parallel Splits}\label{sec:CasesForSplitPushParallel}
When $c(u)$ and $c(v)$ are parallel in $A(w)$, it is possible to use either one as the split for $w$.
Since $c(u)$ and $c(v)$ are parallel in $A(w)$, that is the subsets they create are nested, when one is used for the split for $w$, the other is present in the region of just one of the daughters of $w$.
For example if $c(u)$ is used as the split for $w$, and $R(v)\cap A(w) \subset R(u)\cap A(w)$ then $c(v)$ intersects $A(r(w))$ but not $A(l(w))$.
Therefore when Algorithm \ref{alg:TreeCombiner} is called on $r(u)$, $v$, and $r(w)$, and called on $l(u)$, $l(v)$, and $l(w)$.
The cases when $c(u)$ is used to split $A(w)$ are outlined below, and since they are similar, the instructions for when $c(v)$ is used to split $A(w)$ are omitted. \\

\noindent choose either $u$ ro $v$ to split the region of $w$\\
suppose $u$ is chosen to split $w$ then do the following\\
create daughters for $w$, $l(w)$ and $r(w)$\\
let $c(w)=c(u)$ (thus $A(l(w))=A(w)\cap L(u)$ and $A(r(w))=A(w)\cap R(u)$)\\
There are two cases for recursion:
\begin{itemize}
	\item[(i)] \bif $c(v)$ intersects $L(u)\cap A(w)$ \bthen 
	\begin{itemize}
		\item[] {\bf combine}$(l(u),v,l(w))$
		\item[] {\bf combine}$(r(u),r(v),r(w))$
	\end{itemize}
	\item[(ii)] \bif $c(v)$ intersects $R(u)\cap A(w)$ \bthen
	\begin{itemize}
		\item[] {\bf combine}$(l(u),l(v),l(w))$
		\item[] {\bf combine}$(r(u),v,l(w))$ 
	\end{itemize}
\end{itemize}
\bif $v$ is chosen to split $w$ \bthen do the above, but swap the roles of $u$ and $v$

\subsubsection[Case: identical]{Cases for Algorithm \ref{alg:TreeCombiner}: Identical Splits}\label{sec:CasesForSplitPushIdentical}
When $c(u)$ and $c(v)$ induce the same partition on $A(w)$ either one can be used as the split at $w$. Pseudo-code for this situation is given below: \\
let $c(w)=c(u)$\\
create daughters for $w$, $l(w)$ and $r(w)$\\
\bif $R(u)\cap A(w)=R(v)\cap A(w)$ \bthen
\begin{itemize}
	\item[] {\bf combine}$(l(u),l(v),l(w))$ 
	\item[] {\bf combine}$(r(u),r(v),r(w))$
\end{itemize}
\belse ($R(u)\cap A(w)=L(v)\cap A(w)$)
\bif $R(u)\cap A(w)=R(v)\cap A(w)$ \bthen
\begin{itemize}
	\item[] {\bf combine}$(l(u),r(v),l(w))$
	\item[] {\bf combine}$(r(u),l(v),r(w))$
\end{itemize}

\subsubsection[Case: Terminal Nodes]{Cases for Algorithm \ref{alg:TreeCombiner}: Terminal Node Reached}\label{sec:CasesForSplitPushTerminal}
When either node $u$ or node $v$ is terminal a base case is reached.
There are three possibilities: both $u$ and $v$ are terminal, only $u$ is terminal, or only $v$ is terminal. 
If both $u$ and $v$ are terminal, then node $w$ is assigned their values, $F_w=(f_u,f_v)$.
If only $u$ is terminal, then the region of $w$, $A(w)$ is further partitioned by $T^2$. 
Therefore, it is necessary to collect the subtree of $T^2$, contained in $A(w)$, denoted $T^2_{A(w)}$.
Details for how to obtain $T_{A(w)}$ are in Section \ref{sec:ExtractSubtree}.
Once $T_{A(w)}$ is obtained, the values in its terminal nodes are combined with $f_u$. 
The resulting tree is appended to $T$ at $w$.
When only $v$ is terminal, the operations are similar, only the roles of $u$ and $v$, and $T^1$ and $T^2$ are switched.  
Pseudo-code is given below:
\begin{itemize}
	\item[] \bif $u$ is terminal \bthen
	\begin{itemize}
		\item[] copy the subtree at $v$ inside the region $A(w)$, call it $T_{A(w)}$
		\item[] for each terminal node $v'$ in $T_{A(w)}$ replace its value, $f_{v'}$ with $F_{v'}=(f_u,f_{v
			'})$
	\end{itemize}
	\item[] \belse \bif $v$ is terminal \bthen do the same, but swap the roles of $u$ and $v$
\end{itemize}

%
%

\subsection{Correctness and Computational Cost}\label{sec:correctness}
The proof of correctness is much simpler in the special case when we assume that
whenever given the option to choose a split for $w$ from either $u$ or $v$, $c(u)$ is always chosen.
Assuming $c(u)$ is chosen whenever possible, Algorithm \ref{alg:TreeCombiner} performs a depth first search of $T^1$, until a terminal node of $T^1$ is reached.
In some calls to Algorithm \ref{alg:TreeCombiner}, the split from $v$ will not intersect $A(u)$.
In this case proceeds by a recursive call to Algorithm \ref{alg:TreeCombiner} for $u$, and for whichever daughter of $v$, either $l(v)$ or $r(v)$, has a region, $L(v)$ or $R(v)$ which contains $A(u)$.
Since $T^2$ is a tree, any splits from $T^2$ which intersect $A(u)$ must be present in the subtree of that daughter.
This guarantees that whenever $u$ is a terminal node of $T^2$ any portion of $T^2$ that intersects $A(u)$ will be contained in the subtree of $T^2$ at $v$.
Hence, once a terminal node of $T^1$ is reached, the subtree at $v$ contains the entire subtree of $T^2$ inside $A(u)$.

In general, the Algorithm \ref{alg:TreeCombiner} is valid whether $c(u)$ or $c(v)$ is used when given the choice to split $w$.

\begin{lem}\label{lem:SubtreesValid}
	At each call to Algorithm \ref{alg:TreeCombiner} the subtrees at the nodes of $u$ and $v$, contain any parts of $T^1$ and $T^2$ which intersect with $A(w)$.
\end{lem}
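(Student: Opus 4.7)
My plan is to prove the lemma by induction on the depth of recursion of Algorithm \ref{alg:TreeCombiner}. The central observation that will make each inductive step work is the hierarchical structure of regions in a binary tree: for any two nodes $p$ and $q$ in the same tree $T$, either $A(p) \cap A(q) = \emptyset$ or one of the regions contains the other, and if $q$ lies in the subtree at $p$ then $A(q) \subseteq A(p)$. In particular, every node in $T^1_{r(u)}$ has region contained in $R(u)$ and is therefore disjoint from any region contained in $L(u)$, and symmetrically for $T^1_{l(u)}$.

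For the base case I would handle the initial invocation of {\bf combine} with $u$ the root of $T^1$, $v$ the root of $T^2$, and $w = w_0$ with $A(w_0) = \mathcal{D}$. Since the subtree at the root of a binary tree is the whole tree, every node of $T^1$ is in the subtree at $u$ and every node of $T^2$ is in the subtree at $v$, so the invariant holds trivially.

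For the inductive step I would assume the invariant at a call {\bf combine}$(u,v,w)$ and verify it at every recursive subcall. By the symmetry between $T^1$ and $T^2$, it suffices to check the first argument. Let $u'$ and $w'$ denote the new first and third arguments. In every branch produced by Sections \ref{sec:CasesForSplitPushNoIntersection}--\ref{sec:CasesForSplitPushIdentical}, an inspection of the pseudocode shows that $A(w') \subseteq A(w)$ and $A(w') \subseteq A(u')$ by construction. Given any $q \in T^1$ with $A(q) \cap A(w') \neq \emptyset$, the inclusion $A(w') \subseteq A(w)$ together with the inductive hypothesis places $q$ in $T^1_u$. If $u' = u$ (which occurs exactly when the split at $w$ did not come from $T^1$), then $q \in T^1_{u'}$ immediately. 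Otherwise $u' \in \{l(u), r(u)\}$; then $A(w') \subseteq A(u')$ forces $A(q) \cap A(u') \neq \emptyset$, and the hierarchical property forces $q$ to be either $u$ itself (a node whose split is already resolved on $A(w')$) or a node of $T^1_{u'}$. Since all four split-case branches route the recursion to precisely the $u'$ (and analogously the $v'$) satisfying $A(w') \subseteq A(u') \cap A(v')$, the invariant propagates along each recursive call.

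The chief technical subtlety I anticipate is the status of $q = u$ when $u' \neq u$: read literally, the subtree at $u'$ does not contain $u$, even though $A(u) \supseteq A(w')$. I would handle this by interpreting ``parts of $T^1$ which intersect with $A(w)$'' as the nodes that could further refine the partition of $A(w)$, namely the nodes $q$ with $A(q) \cap A(w) \subsetneq A(w)$ together with terminal nodes whose region meets $A(w)$. Under this reading an ancestor of $u$ is irrelevant because its split has already been resolved on $A(w)$, the disjointness argument above closes the induction, and the downstream uses of the lemma in the correctness proof of Algorithm \ref{alg:TreeCombiner} go through unchanged.
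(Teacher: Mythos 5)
Your proof is correct and follows essentially the same route as the paper's: induction on the recursion depth, with the base case at the roots and the inductive step checked across the recursive calls of \textbf{combine}. Where the paper verifies the four split cases separately, you compress them into the single observation that every recursive call satisfies $A(w')\subseteq A(w)\cap A(u')\cap A(v')$ and then invoke the nested (laminar) structure of regions within a single tree; your explicit handling of ancestor nodes $q$ with $A(q)\supseteq A(w')$, which the paper's informal phrasing of the invariant glosses over, is a genuine gain in precision rather than a different method.
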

\begin{proof}
	For the first call to Algorithm \ref{alg:TreeCombiner}, $u$ and $v$ are the roots of $T^1$ and $T^2$, respectively; and $A(w)$ is the entire domain $D$.
	We will argue by induction.
	We must prove that the inductive hypothesis is true for the input to Algorithm \ref{alg:TreeCombiner}, then it must be true for recursive calls to Algorithm \ref{alg:TreeCombiner}.
	This can easily be verified the four recursive cases as follows:
	
	\begin{itemize}
		\item[(i)] (intersection absent) If neither $c(u)$ nor $c(v)$ intersects $A(w)$, then $A(w)$ is completely contained inside one daughter of $u$ and one daughter of $v$. The algorithm locates the daughters which contain $A(w)$ can calls Algorithm \ref{alg:TreeCombiner} on that case. 
		Suppose just one of $c(u)$ doesn't intersect $A(w)$ but $c(v)$ does, then the algorithm identifies which daughter of $c(u)$ contains $A(w)$, and calls split and push on that daughter, $v$, and $w$. 
		The last case is symmetric.
		\item[(ii)] (crossing splits) In this case $c(u)$ and $c(v)$ divide $A(w)$ into four non-empty subsets.
		Suppose $c(u)$ is used for the split of $A(w)$.
		Since we have assumed the inductive hypothesis, any subtrees of $T^1$ that intersect $A(w)$ are in the subtree at $u$.
		Since the subtrees at $l(u)$ and $r(u)$ are restricted to $L(u)$ and $R(u)$, respectively, the only subtrees of $T^1$ which intersect with $L(w)$ and $R(w)$ in the subtree at $l(u)$ and $r(u)$, respectively. 
		Since $c(v)$ crosses $c(w)$, subtrees in $T^2$ which intersect $A(l(w))$ and $A(r(w))$ are in the subtree at $v$. 
		Hence Algorithm \ref{alg:TreeCombiner} is called $l(u)$, $v$ and $l(w)$, and $r(u)$, $v$ and $r(w)$.
		The symmetric case, i.e. using $c(v)$ as the split for $w$, is similar. 
		\item[(iii)] (parallel splits) In this case $c(u)$ and $c(v)$ divide $A(w)$ into three non-empty subsets. 
		Suppose $c(u)$ is used as the split for $A(w)$.
		The split $c(v)$ intersects with the region for one daughter of $w$ and not the other.
		Suppose $c(v)$ intersect with the region of the left daughter of $w$, $A(l(w))$.
		In this case an subtrees of $T^1$ and $T^2$ which intersect with $A(l(w))$ are in subtrees at $l(u)$ and $v$, respectively. 
		Since $c(u)$ and $c(v)$ are parallel and $c(v)$ intersects $A(l(w))$, the region of the right daughter of $w$, $A(r(w))$, is completely contained in either $A(l(v))$ or $A(r(v))$.
		Suppose $A(r(w)) \subset A(r(v))$.
		Thus, subtrees of $T^1$ and $T^2$ which intersect $A(r(w))$ are in subtrees at $r(u)$ and $r(v)$.
		Symmetric cases, with the roles of $u$ and $v$, and/or the roles of their left and right daughters switched, follow similar lines of reasoning.
		\item[(iv)] (identical splits) Splits $c(u)$ and $c(v)$ are the same, possibly swapping left and right partitions. 
		Suppose that $A(w) \cap L(u) = A(w) \cap L(v)$. 
		Any subtrees of $T^1$ and $T^2$ which intersect $A(l(w))$ are contained in the subtrees at $l(u)$ and $l(v)$; and likewise, any subtrees of $T^1$ and $T^2$ which intersect $A(r(w))$ are contained in the subtrees at $r(u)$ and $r(v)$, respectively. 
	\end{itemize}
\end{proof}

%

Assuming $u$ is chosen whenever the choice between $u$ and $v$ must be made, Algorithm \ref{alg:TreeCombiner} is called once for each node in $T^1$.
Once a terminal node of $T^1$ is reached, then Algorithm \ref{alg:ExtractBinT} is called at most once for each node in $T^2$. Therefore, in the worst case the total number of calls to Algorithm \ref{alg:TreeCombiner}, and Collect Subtree, is $n_1n_2$, where $n_1$ and $n_2$ are the number of nodes in $T^1$ and $T^2$ respectively.

\section{Tree distances and correlations}\label{sec:distances}
The goal of this section is to describe, distances and correlations for trees, which quantify the degree of difference between two trees.
We also describe how to efficiently compute distance and correlation between trees as an extension of Algorithm \ref{alg:TreeCombiner}.

\subsection{Tree Distances}\label{sec:TreeDistances}
The norm of a function, $f$, with respect to a measure, $p$, on its domain, $\D$, is
\begin{equation}
\norm{f}=\left(\int_\D f(x)^2dp(x)\right)^{1/2}.
\end{equation}
The norm of the difference between two functions $f$ and $g$, defines a metric, also called a distance, 
\begin{equation}
\norm{f-g}=\left(\int_\D (f(x)-g(x))^2dp(x)\right)^{1/2}.
\end{equation}

For trees, the square of the norm can be decomposed into a sum of the squares of norms of the set of terminal nodes, $W$,
\begin{equation}
\norm{T}^2=\sum_{w \in W}\int_{A(w)} \norm{f_w(x)}^2dp(x),
\end{equation}
since the regions of the terminal nodes are a partition of the domain. 
The sqaure of the distance between two trees, $\norm{T^1-T^2}^2$, can be computed by: (1) combining them into a single tree $\T$ with each terminal node $w\in W_\T$ associated with a multifunction $(f_w^1,f_w^2)$, as described in Section \ref{sec:CombineRPF}, and (2) computing the sum of the sqaure of the distance between the functions at each terminal node of $\T$, $W_\T$, 
\begin{equation}
\norm{T^1-T^2}^2=\sum_{w \in W_\T}\int_{A(w)} \norm{f^1_w(x)-f^2_w(x)}^2dp(x).
\end{equation}
For regression problems with continuous response, it is common to use a single scalar value at each terminal node, $f_w(x)=a_w$, and in this case the distance between two trees simplifies to 
\begin{equation}\label{eq:TreeDistRegScalar}
\norm{T^1-T^2}^2=\sum_{w \in W_\T}(a^1_w-a^2_w)^2\int_{A(w)}1 dp(x).
\end{equation}
However, a different formula is required for classification and density estimation since in this context the trees map to sets of classes, or assignments of probabilities to sets of classes, and typically a metric to quantify the difference between classes is not provided.
Classification trees often provide estimates of the class probabilities. 
Treating estimates of class probabilities as vectors, we can quantify the difference between two estimates of class probabilities as the norm of their difference.
For classification trees which do not provide estimates of class probabilities a simple solution is to use a probability of 1 for the predicted class.
Consider a classification problem with $S$ classes.
We assume that classification tree, $T^i$, at terminal node $w$ maps every point $x \in A(w)$, to a vector of class probabilities $\left[f^i_{ws}|s\in S\right ] \in \mathbb{R}^{|S|}$.
Let $T^1$ and $T^2$ be classification or density estimation trees.
When $T^1$ and $T^2$ are represented by a single tree $\T$, the values at each terminal node $w \in W_\T$ is a $|S|$ by $2$ dimensional matrix $\left [ f^1_{ws},\;f^2_{ws} |s\in S \right ]$.
Thus the distance between $T^1$ and $T^2$ is
\begin{equation}\label{eq:TreeDistClass}
\norm{T^1-T^2}^2=\sum_{w \in W_\T}\sum_{s\in S}{(f^1_{ws}-f^2_{ws})^2}\int_{A(w)} 1 dp(x).
\end{equation}

Equations \ref{eq:TreeDistRegScalar} and \ref{eq:TreeDistClass}, both depend on the measure $p$ and the regions of terminal nodes, $A(w), \; w \in W_\T$. 
Ideally, the measure $p$ should reflect the unknown density from which the sample is obtained. 
Since the distribution is unknown we will have to estimate it and/or make assumptions. For instance we could assume that the distribution of the data comes from a uniform distribution on $\D$, and use the uniform measure when computing the weight of each terminal node $w \in W_T$.
If data is available when computing the distance between $T^1$ and $T^2$ then we can use the proportion of the sample in the region of a terminal node as the weight for that node. 
This choice of measure would cause the distance to capture the discrepancy between $T^1$ and $T^2$ in regions which support the majority of the mass of the observed distribution, while ignoring their difference in regions with no data.


If we use a uniform density for $p$ then the distance between $T^1$ and $T^2$ can be computed with a recursive algorithm, which is more efficient than computing each term of the sum in Equation \ref{eq:TreeDistRegScalar} or Equation \ref{eq:TreeDistClass} independently. We use $p(A)$ to denote the measure of region $A$, $p(A)=\int_A 1 dp(x)$.

\begin{algorithm}[H]
	\KwData{a node $w$ in the combined tree $\T$ from trees $T^1$ and $T^2$}
	\KwResult{$\norm{T^1-T^2}^2$ over $A(w)$ normalized by the measure of $A(w)$}
	\eIf{$w$ is terminal}{
		{\bf return } $\norm{f^1_w-f^2_w}^2$
	}{
		$p_r\leftarrow p(A(r))/p(A(w))$\\
		$p_l\leftarrow p(A(l))/p(A(w))$\\
		{\bf return}  $p_l${\bf tree\_dist}$(l)+p_r${\bf tree\_dist}$(r)$
	}
\caption{{\bf tree\_dist}$(w)$}\label{alg:TreeDistRecursiveVol}
\end{algorithm}


Generalizations of distances to cases when response variables are elements of metric spaces other than $\mathbb{R}$, e.g. $\mathbb{R}^p$, would require a different bifunction to measure the difference between $T^1$ and $T^2$, but nevertheless methods for computing such distances would follow the same two steps as the univariate response case: (1) combine the trees and (2) reduce the problem a sum over the terminal nodes of the combined tree.

We created a sample of 100 multivariate predictor and univariate response pairs, $(x^1,y^1), \ldots,(x^{100},y^{100})$, sampled with $x^i$ uniformly random in the region $[0,10]\times [0,10]$ and $y^i$ is obtained by evaluating the tree in Figure \ref{fig:SimpleTreeModel} at $x^i$ ($y^i$ is not corrupted with noise).
We used the Random Forest R Package to generate an ensemble of trees from this data and computed the distance between each pair of trees.
Multidimensional scaling plots of these tree distances are in Figure \ref{fig:MDSCloud}.
Non-linear structures are apparent in these projections of the ensemble into three dimensional Euclidean space. 
Such patterns suggest motifs or families of trees in the ensemble.
This result indicates that this metric could be used in a method for selecting a subset of representative element from the ensemble.
However, the development of a formal method is left as a topic for further research.

\begin{figure}
	\centering
	\includegraphics[width=3in]{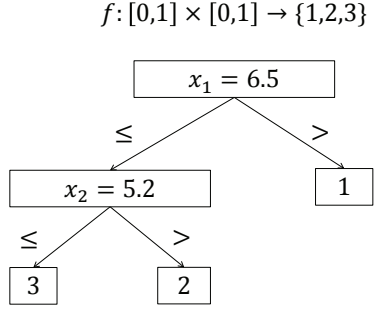}
	\caption{Tree model for computational experiments.}
	\label{fig:SimpleTreeModel}
\end{figure}

\begin{figure}
	\centering
	\includegraphics[width=3in]{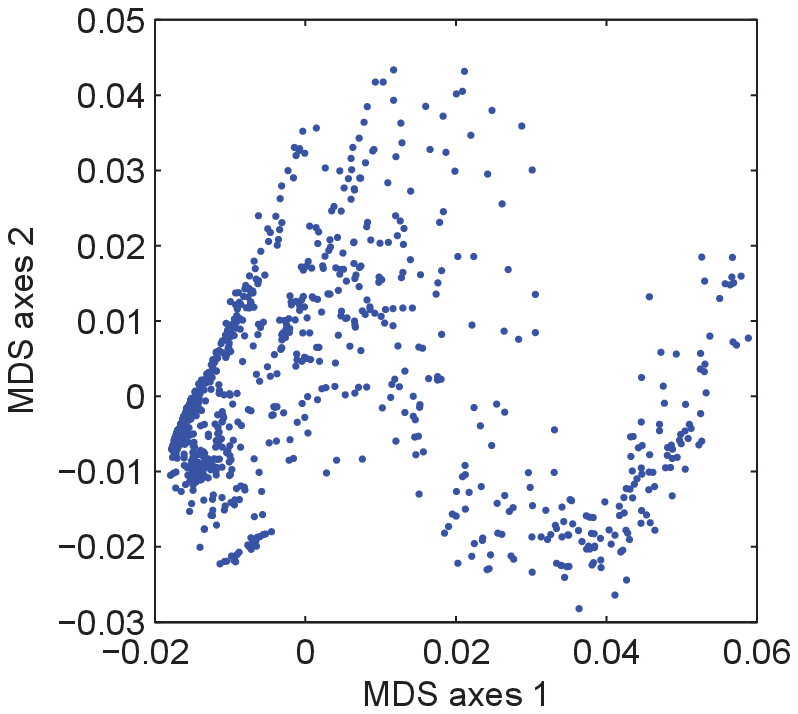}
	\includegraphics[width=3in]{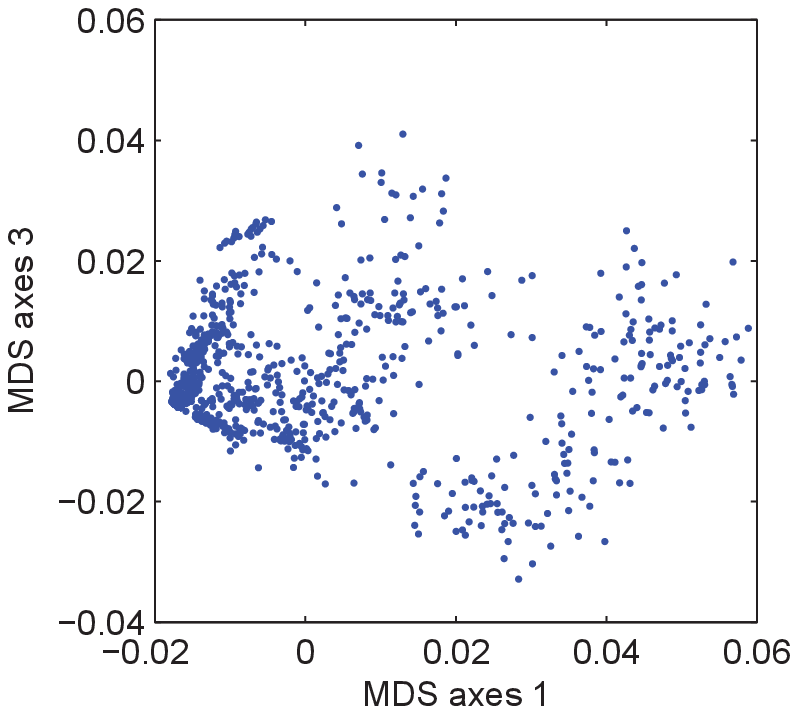}
	\caption{Multidimensional scaling plots of point cloud of a sample of trees from Random Forest, where distances between trees are measured in ${\mathcal{L}_2}$ function space. 
	}
	\label{fig:MDSCloud}
\end{figure}

\subsection{Tree Correlations}\label{sec:TreeCorrelations}
The distance between trees will quantify their difference, however, it is not standardized relative to the norms of the functions.
Correlation is an alternative quantification of the similarity between trees which is on a standardized scale between -1 and 1.
In this section we define correlation between two trees as a generalization of the commonly used Pearson correlation for random variables.

The correlation between two trees, $T^1$ and $T^2$, is their covariance standardized by the product of their standard deviations,
\begin{equation}\label{eq:RegTreeCorr}
\rho_{T^1,T^2} =\frac{\sigma_{T^1,T^2}}{\sigma_{T^1}\sigma_{T^2}}.
\end{equation}
The covariance between regression trees $T^1$ and $T^2$ quantifies their similarity as the integral of the product of their deviance from their respective mean values at each point $x$ in their domain with respect to a measure $p(x)$,
\begin{equation}\label{eq:RegTreeCov}
\sigma^2_{T^1,T^2} =\int_D (T^1(x)-\mu_{T^1})(T^2(x)-\mu_{T^2})dp(x),
\end{equation}
where the mean and standard deviation of regression tree $T^i$ are
\begin{equation}\label{eq:RegTreeMean}
\mu_{T^i} = \int_{D} T^i(x) d p(x)
\end{equation}
and
\begin{equation}\label{eq:RegTreeVar}
\sigma^2_{T^i} =\int_{D} (T^i(x)-\mu_{T^i})^2 dp(x).
\end{equation}

Separating these integrals over the disjoint regions of the terminal nodes the mean and standard deviation of regression tree $T^i$ can be expressed as sums over $W_{T^i}$, the set of terminal nodes,
\begin{equation}
\mu_{T^i} =\sum_{w \in W_{T^i}} \int_{A(w)} T^i(x) d p(x)
\end{equation}
and
\begin{equation}
\sigma^2_{T^i} =\sum_{w \in W_{T^i}} \int_{A(w)} (T^i(x)-\mu_{T^i})^2 dp(x).
\end{equation}
Similarly, the covariance for $T^1$ and $T^2$ can be expressed as a sum
over $W$, the set of nodes in a combined tree representing $T^1$ and $T^2$,
\begin{equation}
\sigma^2_{T^1,T^2} =\sum_{w\in W}\int_{A(w)} (T^1(x)-\mu_{T^1})(T^2(x)-\mu_{T^2})dp(x).
\end{equation}
Recursive algorithms with the same pattern and assumptions as Algorithm \ref{alg:TreeDistRecursiveVol} can be formulated to compute tree means, variances, and covariances.

Regarding classification problems, as discussed in Section \ref{sec:TreeDistances}, a rule for quantifying the discrepancy between the classes is not always available, however we can use the norm of the difference between the probability estimates of different classes to quantify the discrepancy between class predictions.
If the response of classification trees is a vector of class probabilities, the definitions of correlation, covariance, mean, and variance for regression trees (\ref{eq:RegTreeCorr}-\ref{eq:RegTreeVar}) no longer apply.
However, generalizations of these concepts can be defined.

The variance covariance matrix for a probability density tree quantifies the degree to which the probability of classes vary together, either above or below the average class probabilities.
This can be used to diagnose the extent to which it is hard to discriminate between two classes.
 
\subsection{Distances between Forests}
Consider two forests $f^1,\ldots,f^J$ and $g^1,\ldots,g^K$, where each tree maps from the same domain to the real numbers, and their aggregate functions $F(x)=\sum_{j=1}^J f^j(x)$ and $G(x)=g^1(x)+\ldots+g^K(x)$.
The squared of the 2-norm or squared-distance between $F$ and $G$ with respect to a measure $p$
is 
\begin{eqnarray}\label{eq:ForestDistance}
d(F,G)=&\int_D (F(x)-G(x))^2dp(x)\\
\end{eqnarray}
When the measure $p$ is restricted to a finite set of points masses, not too large in number, it will be possible to compute this distance directly from the representation of $F$ and $G$ as sums of trees.
However, when $p$ is continuous, or the if the $p$ is constituted by a vast number of discrete points, it is not possible to compute the value of the $d(F,G)$ directly by formula \ref{eq:ForestDistance}.
If $J$ and $K$ are not too large, and the dimension of the domain of $F$ and $G$ is not too large, then it may be possible to represent $F-G$ as a single tree using Alg. \ref{alg:TreeCombiner}, evaluate the distance using Alg. \ref{alg:TreeDistRecursiveVol}.
However, since the size of the combined tree will grow multiplicatively due to the intersection of splits from the different trees the size of the combined tree $F-G$ will be much larger than the sum of the sizes of the individual trees $f^1,\ldots,f^J$ and $g^1,\ldots,g^K$.
With simplifying assumptions we can show that the size of the combined tree could grow exponentially in
the number of trees a forest.
Suppose each tree partitions the domain $D=[0,1]^M$ into two pieces by partitioning on dimension $m$.
Let $(x_1,\ldots,x_M)$ be a point in $D$. Suppose an ensemble is composed of $K$ trees representing functions of the form $f^k(x)=a_k$ if $x_{m_k}\leq c_k$ and $f(x)=b_k$ if $x_{m_k}>c_k$ for $k=1,\ldots,K$. 
Suppose that the first $k_1$ trees split on dimension $1$, that is $m_k=1$, and for each $m=2,...,M$, the next $k_m$ trees split on dimension $m$.
The sum of the first $k$ functions is $F_k=\sum_{i=1}^k f^k$.
How many rectangular cells does the function $F_K$ partition $D$ into?
Assuming non-degeneracy $F_{k_1}$ has $k_1$ splits on dimension $1$, and thus partitions the $D$ into 
$k_1+1$ cells.
The plane $x=c_{k_1+1}$ intersects all the planes $x=c_1,\ldots,x=c_{k_1}$, and thus $F_{k_1+1}$ partitions $D$ into $2(k_1+1)$ cells. Since $c_{k_1+2}\neq c_{k_1+1}$, adding $f^{k_1+2}$ introduces another $k_1$ cells, therefore $F_{k_1+2}$ partitions $D$ into $3(k_1+1)$ cells.
Following the same argument $F_{m_1+m_2}$ partitions $D$ into $(k_1+1)(k_2+1)$ cells.
Continuing the same argument for $m=3,\ldots,M$, we find that $F_M$ partitions $D$ into $n_M=(k_1+1)\times \ldots \times (k_p+1)$ cells. 
Representing a partition of $D$ into this many cells requires a binary tree with $n_M$ of leaf nodes, and $n_M-1$ internal nodes.
So for example with just one tree per dimension, $m=1,\ldots,M$, the representative CART would have $2^{M+1}-1$ nodes.
In stark contrast the total number of nodes in all trees of such a forest is $3M$.

Expanding the squared difference $(F(x)-G(x))^2$ and using the linearity of the integral operator the squared distance between $F$ and $G$ can be computed as sums of much simpler terms,
\begin{eqnarray}
d(F,G)=&\int_D (\sum_j (f^j(x))^2+\sum_k (g^k(x))^2 - 2\sum_{j,k}f^j(x)g^k(x) )dp(x)\\
=& (\sum_j \int_D(f^j(x))^2+\sum_k \int_D(g^k(x))^2 - 2\sum_{j,k}\int_Df^j(x)g^k(x) )dp(x).
\end{eqnarray}
The inner product of two trees $\int_D T^1(x)T^2(x)$ can be computed using an algorithm with the same data and recursive format as Alg. \ref{alg:TreeDistRecursiveVol}, and for the base case, when $w$ is a terminal node, the algorithm will return $f^1_w \times f^2_w$ instead of $\norm{f^1_w-f^2_w}$.

\section{Solutions for subproblems in specific tree contexts}\label{sec:specifications}
\subsection{Checking if a split divides a region}\label{sec:SplitRegion}

\subsubsection{Univariate splits for discrete variable}\label{sec:SplitRegionDiscrete}
Suppose split $c$ acts on a discrete variable $x$. Then the condition $c$ is intersect a subset of the domain $A$, if it divides the elements of $x$ in $A$ into two non-empty sets.

\subsubsection{Univariate splits for continuous variables}\label{sec:SplitRegionUnivariate}
When splits are made on a single variable at a time the intersection of a split and a region can be achieved
by testing the intersection of the split and the restriction of the region to the same variable.
That is if the region is defined by univariate linear inequalities, then only the inequalities involving the variable for the split being tested are relevant.
Likewise, for categorical variables it would only be necessary to test subsets of the variable for the split being tested.

\subsection{Multivariate splits for continuous variables}

The purpose of this section is to describe the geometry of recursive partitions when multivariate splits are used for continuous variables, which results in polyhedral regions.
Generally, computing volumes of polyhedra or integrals of functions over polyhedra requires exponential time algorithms, or randomized approximations are used. 
Hence computing the $\mathcal{L}_2$ distance between recursive partition function with multi-variate splits may require impractical amounts time. 
Fortunately, some of the most popular classification and regression tree methods, such as CART, Random Forest, and boosting with trees, use splits on one variable.
Recursive partitions based on splitting the data with one variable at a time yield much simpler cases. 
Splitting the data based on one variable yields a partition of the domain into rectangular boxes.
However, we provide some details for checking intersections or computing volumes regions for trees based on multivariate splits since this may be useful for some applications.

\subsubsection{Geometry of multivariate splits: hyperplanes and polytopes}\label{sec:Preliminaries}

\begin{figure}
	\includegraphics[width=9cm]{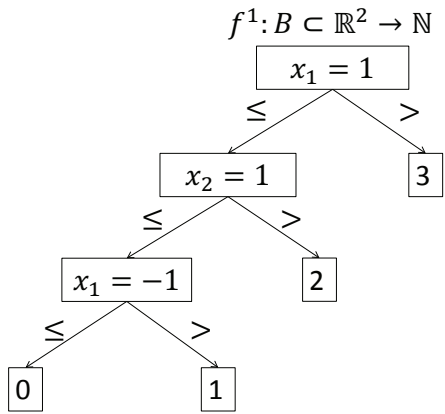}
	\includegraphics[width=9cm]{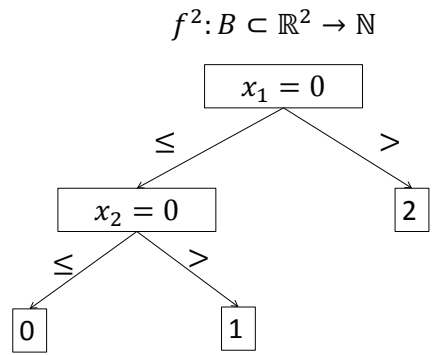}
	\includegraphics[width=9cm]{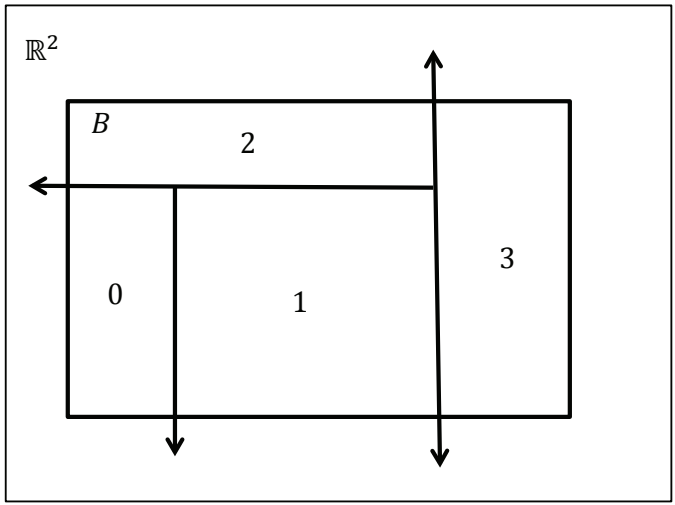}
	\includegraphics[width=9cm]{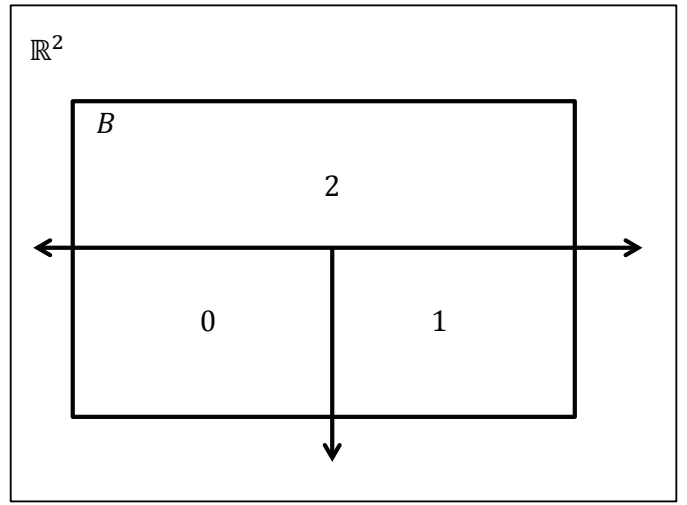}
	\caption{Two functions mapping polyhedral regions of two dimensional real Euclidean space, $\mathbb{R}^2$, defined by recursive partitioning, to the natural numbers, $\mathbb{N}$. }
	\label{fig:RecursivePartitionFunctionsExample}
\end{figure}

Given a scalar $b\in \mathbb{R}$, and a vector, $c$, in $n$-dimensional real Euclidean space, $ \mathbb{R}^n$, a linear equality, $c' x = b$, defines a hyperplane $H=\{x\in \mathbb{R}^n| c' x =b\}$. A hyperplane can be used to define two complementary regions, $H^>=\{x| c' x > b\}$ and $H^\leq=\{x|c' x \leq b\}$, called its upper open half-space, and its lower half-space, respectively. 

A polyhedron is a subset $S \subset \mathbb{R}^n$ which is defined by intersections of half-spaces and open half-spaces. 
A polyhedron can be divided into two complementary polyhedra contained in the complementary half-spaces of an intersecting hyperplane.


A recursive partition is a plane tree, $T$, with a root, $r$, a node set $V$ which naturally partitions into a set of interior nodes, $I$, a set of terminal nodes, $L$, called leafs.
Each leaf associated with a natural number, $n_l \in \mathbb{N}$
Each interior node, $v \in I$, associated with a hyperplane, $H_v \subset \mathbb{R}^n$.
The root and each interior node are associated with a left daughter and a right daughter, $v_{\leq} \in V$, and $v_{>} \in V$, respectively.
Data at $v$ are split, as follows.
Data in the upper half-space $H^>_{v}$ are partitioned to the right daughter and data in the lower half-space $H_v^\leq$ are partitioned to the left daughter.
The hyperplane associated with $v$ is oriented to intersect with the polyhedral region defined by the half-spaces along the path from the root, $r$, to $v$. 
Thus a recursive partition is a tuple of a plane tree and a set of hyperplanes associated with its nodes, $(T,\{H_v|v \in T\})$ which are assumed to intersect in this fashion.

Each point $x \in \mathbb{R}^n$ is contained in one of the polyhedral regions defined by the paths from the root to each leaf.
That is for all $x \in \mathbb{R}^n$, there is a unique leaf node, $l \in L$, such that, $x \in S(r,l)$. 
Thus, a recursive partition defines a polyhedral subdivision of Euclidean space.

Mapping each point $x \in \mathbb{R}^n$ to $n_l$ for $l$ such that $x \in S(r,l)$ defines a function $f:\mathbb{R}^n \to \mathbb{N}$. 
We assume that recursive partitions are used to define functions on a box $B=\{x\in \mathbb{R}^n|l_i \leq x_i \leq u_i\}$. 
For example consider the functions $T^1:B\subset \mathbb{R}^2\to \mathbb{N}$ and $T^2:B\subset \mathbb{R}^2 \to \mathbb{N}$ in Fig. \ref{fig:RecursivePartitionFunctionsExample} which are defined by recursive partitions.

\subsubsection{Tests for intersection of a Hyperplane and a Polyhedron}\label{sec:IntersectionTests}
Let $S$ be a polyhedral region in $\mathbb{R}^n$ defined by the intersection of half-spaces, that is $H^+_1 \cap \ldots \cap H^+_k$ where $H_i=\{x \in \mathbb{R}^n|c'_i x = b_i\} $.
Let $H=\{x \in \mathbb{R}^n | c'x=b\}$ be a hyperplane.
Does $H$ intersect $S$? If not, then is $S$ in $H^+$ or $H^-$?

Answering the questions of whether or not a hyperplane intersects a polytope is related to the problem in linear programming, of determining a set of minimal constraints for bounding a polytope. 
Due to the prevalence of linear programming this question has been investigated previously. 
A simple method is presented here, and finding the most efficient method available in the literature will be a topic of further research.

A linear program can be used to determine if $H$ intersects $S$. 
\begin{align}
\textrm{max } & c'x \label{LP1}\\
\textrm{s.t. } & c'_i x \leq b_i \forall i=1,\ldots,k \label{LP2}\\
& c'x \leq b+1 \label{LP3}
\end{align}

If the linear program defined by (\ref{LP1}-\ref{LP3}) is infeasible then the polytope $S$ is inside $H^+$. In this case the test should be conducted with the signs elements of $c$ reversed so that the polytope $S$ is contained inside the half-space defined by $c'x\leq b+1$. 
Otherwise the polyhedron is contained inside of $H^-$. 

Let $x^*$ be an optimal solution to the linear program defined by (\ref{LP1}-\ref{LP3}). If $c'x^*<b$ then the hyperplane $H$ does not intersect the polyhedron $S$. On the other hand if $c'x^*\geq b$ then the hyperplane $H$ and the polyhedron $S$ intersect.

\begin{figure}
	\includegraphics[width=9cm]{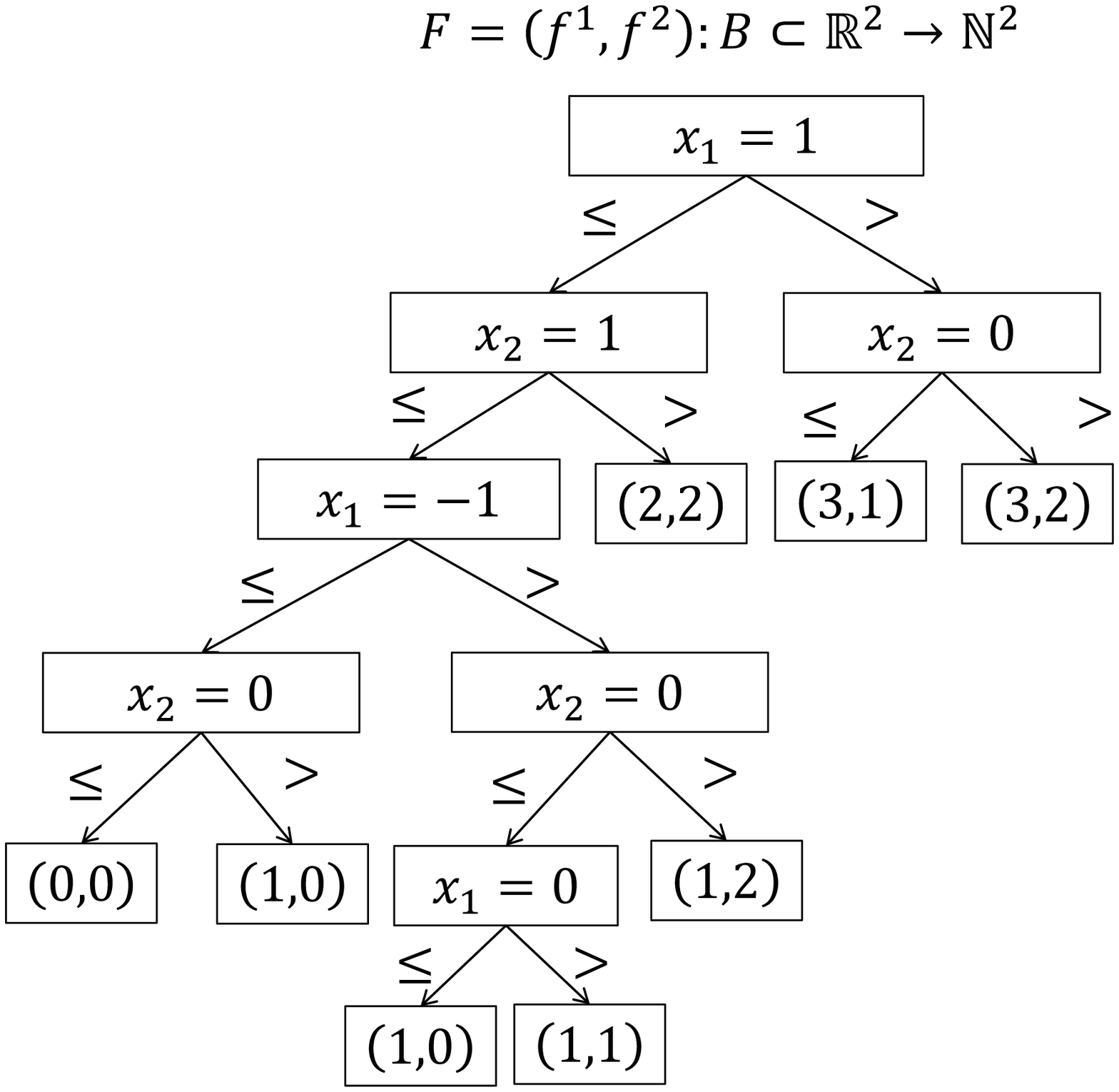}
	\includegraphics[width=9cm]{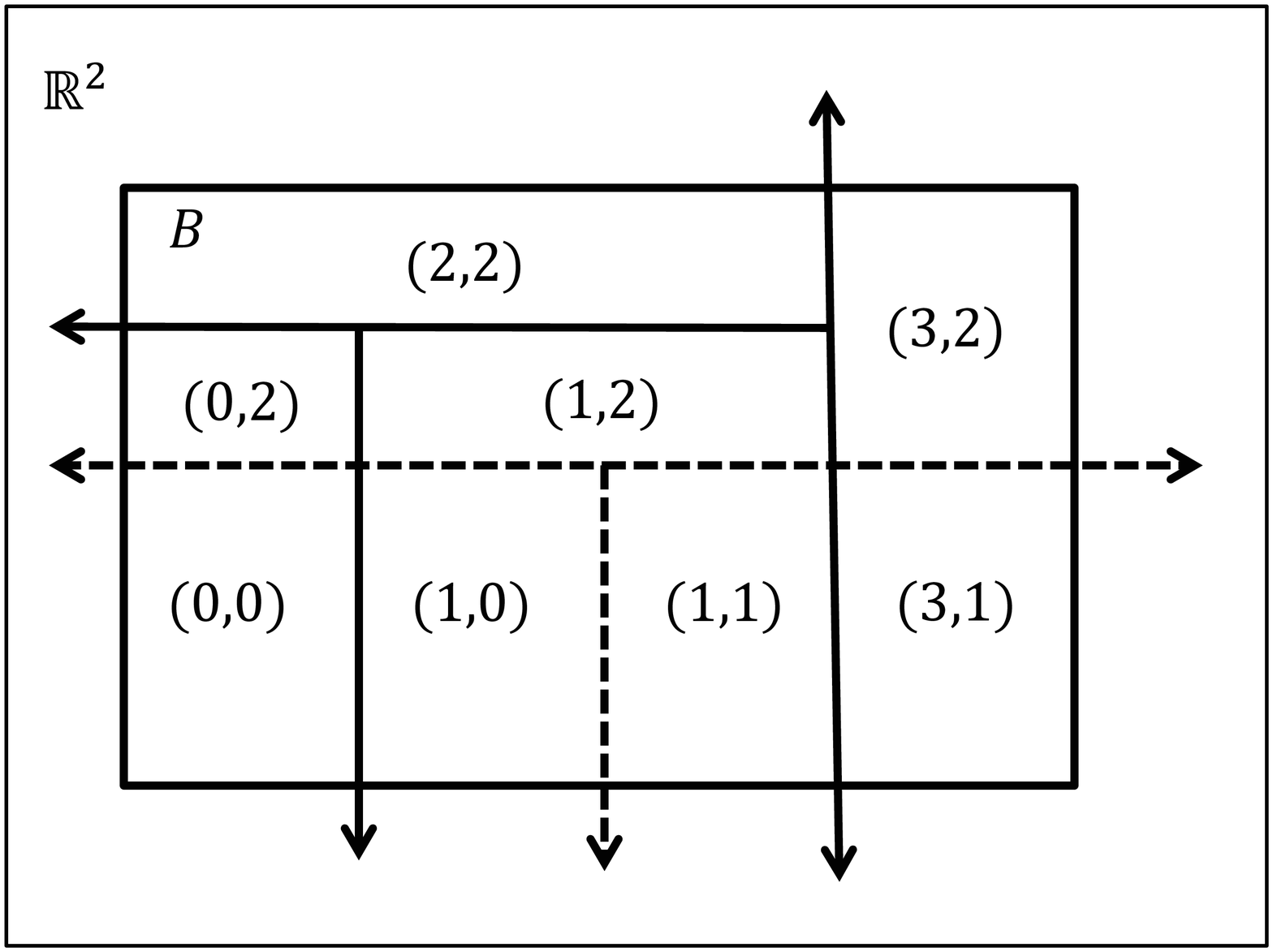}
	\caption{Representation of two recursive partitions, $T^1$ and $T^2$, shown in Fig. \ref{fig:RecursivePartitionFunctionsExample}, as a single recursive partition $F=(T^1,T^2)$ mapping from the same domain, $B$, to a range which is the product of the range of $T^1$ and the range of $T^2$. }
	\label{fig:CombinedRecursivePartition}
\end{figure}

\begin{figure}
	\includegraphics[width=9cm]{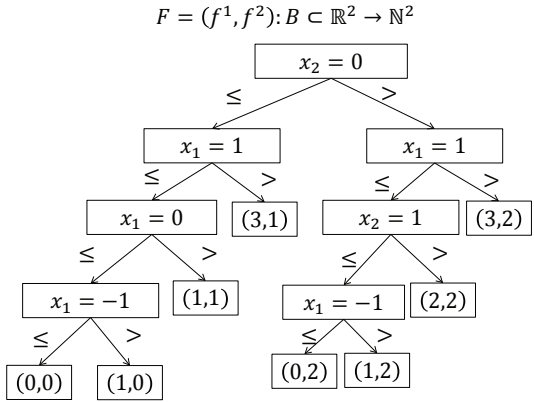}
	\caption{An alternate representation of $F=(T^1,T^2)$ (Figure \ref{fig:CombinedRecursivePartition}).}
	\label{fig:CombinedRecursivePartitionAlternate}
\end{figure}

\section{Concluding remarks and further research directions}
We presented a novel algorithm for computing a single tree which represents multiple recursive partition functions.
This algorithm facilitates quantifying the degree of difference or similarity between pairs of recursive partition functions.

Ensembles of trees are generally regarded as block-boxes for making predictions.
However, is it feasible to simplify an ensemble of trees to just a few trees which have a similar level of predictive power?
Although the algorithm presented in this paper could be used to combine many trees into a single tree, it may yield a tree which has many nodes, and would therefore be to large to comprehend entirely.
Methods for identifying clusters in ensembles of trees based on correlations or distances between trees
could be useful for building a smaller ensemble of a core of essential trees.
We leave these questions for further research.

\sean{
\subsubsection{Volumes of polyhedra}
\section{Example applications}

\subsection{Comparing trees from different methods}

\subsection{Cluster Analysis of Random Forests}
For a given dataset it is not uncommon that several functions based on different set of variables may be approximately equally well-supported by the data.
Several different measures for how well a function fits a given set of observations, including various loss functions, such as misclassification rate, or sum of squared error, and complexity penalized information criteria, such as AIC, BIC or description length.
Such criteria are often used to rank models, and select a 'best choice'.
Although the fact that there can be multiple models which are approximately equal based on these criteria, and that the rankings can switch (and often do in practice) when different criteria are applied, suggest that no one model, from among those being considered, is most appropriate for the data available. 
Model averaging methods, such as bagging (bootstrap aggregation), the mechanism used in random forest, likelihood model averaging, and Bayesian model averaging, attempt to overcome this problem of determining an optimal set of variables.
These methods use a non-negative linear combination of all models under consideration.

A random forest contains many functions which are combined into a final result.
Typically, a random forest is viewed as a black-box.
However, exploring the forest may reveal that the presence or absence of variables in a tree may not be independent. 
In this section we take an object oriented approach to analyzing a forest.
We find that interactions between variables, either through interactive effects on the response, or through correlations, cause clusters in a forest.
However, this effect tends to diminish when sample sizes are large enough to effectively estimate the underlying model, with all variables present.

Scenario 1: Block-correlations
The presence of a variable tends to decrease the presence of variables it is correlated with.

Scenario 2: Block-interactions with positive interactive effects
Variables with positive interactive effects (meaning the presence of one enhances the effect of the others) tend to be present together. Examples: at least k out of p then 1, otherwise 0; quadratic model with interaction and linear terms.

Scenario 3: Block-interactions with negative interactive effects
Variables with negative interactive effects (meaning the presence of one diminishes the effect of the others) tend not to be present together.
Example: quadratic model, with only interaction terms?
}

\bibliographystyle{apalike}
\bibliography{C:/Users/ss3245/Documents/BibTex/library,C:/Users/ss3245/Documents/BibTex/general-stats,C:/Users/ss3245/Documents/BibTex/CARTEnsemblesNoMendeley}

\end{document}